\documentclass[10pt]{amsart}
\usepackage[usenames,dvipsnames]{xcolor}
\usepackage{amsmath,amsthm,amssymb,color,comment,csquotes,enumerate,enumitem,fancyhdr,filecontents,graphicx,verbatim}
\usepackage[round]{natbib}

\usepackage{hyperref}
\hypersetup{colorlinks=true,linkcolor=MidnightBlue,citecolor=MidnightBlue,bookmarks=false,backref=page}

\newtheorem{thm}{Theorem}
\theoremstyle{definition}

\newtheorem{defn}[thm]{Definition}

\newcommand{\thistheoremname}{}
\newtheorem{genericthm}[thm]{\thistheoremname}
\newenvironment{namedthm}[1]
  {\renewcommand{\thistheoremname}{#1}%
   \begin{genericthm}}
  {\end{genericthm}}

\newcommand{\bitz}{\mathtt{0}}
\newcommand{\bito}{\mathtt{1}}
\newcommand{\seqx}{\pmb{x}}
\newcommand{\seqy}{\pmb{y}}

\newcommand{\seqemp}{\pmb{\varnothing}}

\newcommand{\pr}{\textsf{p}}

\newcommand{\bitset}{\{\mathtt{0},\mathtt{1}\}}
\newcommand{\bfin}{\{\mathtt{0},\mathtt{1}\}^*}
\newcommand{\binf}{\mathbb{B}^\omega}

\makeatletter
\renewcommand{\@makefnmark}{\hbox{\@textsuperscript{\tiny\@thefnmark}}}
\makeatother

\begin{document}

\title{Solomonoff induction}
\author[Sterkenburg]{Tom F.\ Sterkenburg}
\date{\today. This is a preliminary version. I welcome feedback.}
\address{Munich Center for Mathematical Philosophy (MCMP), LMU Munich \newline \indent Munich Center for Machine Learning (MCML)}
\email{tom.sterkenburg@lmu.de}

\begin{abstract}
This chapter discusses the Solomonoff approach to universal prediction. The crucial ingredient in the approach is the notion of computability, and I present the main idea as an attempt to meet two plausible computability desiderata for a universal predictor. This attempt is unsuccessful, which is shown by a generalization of a diagonalization argument due to Putnam. I then critically discuss purported gains of the approach, in particular it providing a foundation for the methodological principle of Occam's razor, and it serving as a theoretical ideal for the development of machine learning methods.
\end{abstract}

\maketitle

\section{Introduction}

The ``Bayesian establishment view'' on the problem of induction is that the justification for inductive inference is conditional or relative \citep[p.\ 3]{Hut17}. On this subjectivist view, going back to Ramsey, de Finetti, and Savage, rationality dictates coherent degrees of belief and  inference by conditionalization, but puts no constraints on the essential starting point: the prior probability distribution. 
In the context of inductive inference, the prior is an expression of our \emph{inductive assumptions}, which cannot ultimately be justified in any absolute sense. In particular, these assumptions must be restrictive: adequate in some circumstances, but not in others \citep{How00}. In that sense, there is no ``universal'' prior.\footnote{This
	sketch of an establishment view inevitably overlooks many remaining points of contention. For instance, not all authors are as committed as Howson to viewing Bayesianism as a type of logic, including soundness and completeness theorems \citep[ch.\ 7]{How00}; whereas Howson himself does not actually see dynamic consistency (viz., updating via conditionalization) as a core principle of Bayesianism (ibid., p.\ 135), in contrast to \citet[sect.\ 1.4]{Hut17}. Nor is the view itself uncontroversial:  \citet{Can25philimp} criticizes both the historical pedigree and the success of the idea of a conditional justification of induction. For the purpose of this chapter, we do not have to engage with these issues: the relevant point is that there is no universal prior.}

Similarly, a basic lesson in machine learning is that there is no ``free lunch'' \citep{SteGru21syn}. Every learning algorithm must come with restrictive assumptions, or \emph{inductive biases}, which make for successful learning in some possible circumstances but failure of learning in others \citep{ShaBen14,BenSre11talknips}. In that sense, there is no ``universal'' learning method, Bayesian or otherwise.\footnote{Again,
	this sketch overlooks many details, in particular that precise statements of no-free-lunch results require the specification of a learning problem and notion of successful learning, and that we may have free lunches for sufficiently weak such specifications \citep[sect.\ 2.5]{SteGru21syn}.} 

Yet there is one approach in theoretical computer science which promises exactly that. According to its advocates, the proposal by \citet{Sol64ic} 
``may give a rigorous and satisfactory solution'' to the problem of induction, in the form of a Bayesian learner based on a ``universal prior'' \citep[p.\ 354]{LiVit19}.\footnote{In
	similar vein, Solomonoff induction ``arguably solved the century-old induction problem'' by offering a ``compelling theoretical foundation for constructing [\dots]\ an ideal universal prediction system''  \citep[p.\ 1]{GMGHODCRWMAV24icml}.}
	This approach of \emph{Solomonoff induction} has in recent years  again gained some visibility, as it is regularly brought in as a potential piece in the puzzle of understanding modern deep learning \citep{GMGHODCRWMAV24icml,DRDCGMGWAOHV24icml,GolFinRowWil24icml,MinReeValLou25nc,YouWit25inproc,Wil25arxiv}.

This chapter takes a closer look at Solomonoff induction. I start, in section \ref{sec:univpred}, with the crucial idea of defining universal Bayesian prediction methods with the help of the formal constraint of computability.  I point there at a fundamental conflict between two computability desiderata on such  universal predictors, which was already brought out by Putnam in a critique of Carnap's inductive logic. Then, in section \ref{sec:sollev}, I present the approach due to Solomonoff (and \citet{Lev71phd}, who put it on a rigorous mathematical foundation) as a way of trying to overcome this conflict. I discuss a number of reservations about the success of this move, and about the purported (conceptual and practical) gains of the resulting definition. In section \ref{sec:concl}, I offer some concluding remarks on what this suggests for the use of Solomonoff induction in machine learning as  well as for the Bayesian establishment view.\footnote{This
	chapter draws for large part from \citep{Ste18phd}, and  many more historical, mathematical, and philosophical details may be found there. Yet this is not merely a summary of that work: I have updated both structure and emphasis.}

\section{Universal prediction}\label{sec:univpred}

\subsection{Sequential prediction}\label{ssec:seqpred}
The basic inductive learning problem I consider here is the iterated one-step prediction of a developing binary sequence. At each point in time $t$, we have an ordered sequence $\seqx^t = x_1 \dots x_t$ of the outcomes $x_i \in \{\mathtt{0},\mathtt{1}\}$ we have observed so far, and we seek to predict the next outcome  $x_{t+1}$. Our predictions are in the form of probability distributions over the two possible next outcomes. A predictor is a precise rule for making such predictions: formally, it is a function $\pr: \{\mathtt{0},\mathtt{1}\}^* \rightarrow \mathcal{P}$ from the set of all finite binary sequences to all distributions over $\bitset$. I will also write $\pr$ as a two-place function, where $\pr(\seqx,x)$  denotes $\pr$'s predicted probability of next outcome $x$ given observed sequence $\seqx$.\footnote{\label{fn:univsett}There 
	are of course other possible formal learning set-ups one can study, including more restricted ones where predictors issue categorical ($\mathtt{0}$ or $\mathtt{1}$) predictions (\citealp{BluBlu75ic}; also see \citealp{Bel20tcs}), and learning problems which look altogether different from sequential prediction. The current set-up is very general, and to the extent that alternative learning set-ups can be ``coded back'' into it, might even has a claim to itself being truly ``universal'' (cf.\ \citealp[sect.\ I.1]{Ste18phd}).}

\subsection{Universal reliability}\label{sect:univrel}
What would it mean for a predictor to be universal? The central idea in Solomonoff induction is that we can find an answer in the notion of formal computability. 

\subsubsection{Reliability for computability}
We could call a predictor universal, or universally \emph{reliable}, if it eventually converges to the correct predictions on all computable patterns. If $\seqx^\omega$ is an infinite binary sequence, then prediction method $\pr$ converges on $\seqx^\omega$ if its probabilities converge to the correct next bits, i.e., the probabilities for $\mathtt{1}$ converge to 1 on $\mathtt{1}$'s, and to 0 on $\mathtt{0}$'s:
\begin{align}
\pr(\seqx^t,\mathtt{1}) \xrightarrow{t \rightarrow \infty} x_{t+1}.
\end{align}

We could then call a predictor universally reliable if it converges on \emph{all Turing-computable sequences}. 

But we can do something more general still. Namely, we can consider all computable probability measures over the outcome space.\footnote{In
	the following, I skip over several formal details. For the proper definition of (computable) probability measures over Cantor space (the space of infinite binary sequences), see \citep[sect.\ 2.1.1]{Ste18phd}.} 
A probability measure over the infinite binary outcome sequences is induced by a function $\mu$ on all finite sequences which satisfies
\begin{equation}\label{eq:meas} 
\begin{aligned} 
\mu(\seqemp) &=1; \\
\mu(\seqx\mathtt{0})+\mu(\seqx\mathtt{1}) & =\mu(\seqx) \textrm{ for all }\seqx \in \bfin,
\end{aligned} 
\end{equation} 
where $\seqemp$ is the empty sequence and $\mu(\seqx x)$ is the concatenation of $\seqx$ and bit $x$. (We always start with the empty sequence; while the probability of observing either of the two possible one-bit extensions of $\seqx$ is just the probability of observing $\seqx$.) For instance, the  \emph{uniform} measure $\lambda$ is given by $\lambda(\seqx^t)=2^{-t}$ (every sequence of the same length has the same probability).

A \emph{computable} probability measure is simply given by a Turing-computable $\mu$. Now, for given measure $\mu$, we say that predictor $\pr$ is reliable for $\mu$ when, with $\mu$-probability 1, the predictions of $\pr$ converge to $\mu$'s (``true'') probabilities,
\begin{align}
\pr(\seqx^t,x_{t+1}) \xrightarrow{t \rightarrow \infty} \mu(x_{t+1} \mid \seqx^t).
\end{align}

We could then call a predictor universally reliable if it is reliable for all computable measures. Note that this includes the former case of ``deterministic'' convergence on computable sequences, because a computable sequence is given by a computable measure which concentrates all probability on the single sequence.

\subsubsection{A universal  mixture}
This universal reliability is, theoretically, not so hard to achieve. Consider first the following general definition.

\begin{defn}[Bayesian mixture]\label{def:baymix}
Let $\mathcal{H} = \{ \mu_i \}_{i \in \mathbb{N}}$ be a countable class of measures. Let \emph{weight function} $w: \mathbb{N} \rightarrow (0,1]$ be such that $\sum_{i \in \mathbb{N}} w(i) \leq 1$. The Bayesian mixture measure for $w$ over $\mathcal{H}$ is defined by 
\begin{align}
\xi^\mathcal{H}_w (\seqx) := \sum_{i \in \mathbb{N}} w(i) \mu_i(\seqx), 
\end{align}
and in turn defines a mixture predictor by\footnote{In
	general, there is a formal correspondence between predictors and measures, or at least, measures that assign positive probability to each finite sequence (see \citealp[sect.\ 3.1.1]{Ste18phd}). \label{fn:predsmeass}} 
\begin{align}
\pr^\mathcal{H}_w (\seqx,x) := \xi^\mathcal{H}_w(x \mid \seqx) = \frac{\xi^\mathcal{H}_w (\seqx x)}{\xi^\mathcal{H}_w (\seqx)}. 
\end{align}
\end{defn}

We have the following general reliability result.

\begin{thm}[Bayesian consistency]\label{thrm:bayconsist}
For mixture predictor \emph{$\pr^\mathcal{H}_w$}, for all $\mu \in \mathcal{H}$, with $\mu$-probability 1, 
$\pr^\mathcal{H}_w(\seqx^t,x_{t+1}) \xrightarrow{t \rightarrow \infty} \mu(x_{t+1}\mid \seqx^t).$
\end{thm}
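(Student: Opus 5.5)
The plan is the standard dominance argument: bound a cumulative divergence between $\mu$ and the mixture by a constant, then turn that bound into almost-sure convergence. Fix $\mu = \mu_k \in \mathcal{H}$. By definition, $\xi := \xi^\mathcal{H}_w$ satisfies $\xi(\seqx) \geq w(k)\mu(\seqx)$ for every finite $\seqx$. This \emph{dominance} is the only property of the mixture I will use.

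First I bound the expected cumulative Kullback--Leibler divergence between the one-step predictive distributions. The chain rule for relative entropy gives
\begin{align}
\sum_{t=0}^{n-1} \mathbb{E}_\mu\Bigl[ D\bigl(\mu(\cdot\mid\seqx^t)\,\big\|\,\xi(\cdot\mid\seqx^t)\bigr)\Bigr] = \sum_{|\seqx|=n} \mu(\seqx)\ln\frac{\mu(\seqx)}{\xi(\seqx)} \leq \ln\frac{1}{w(k)} .
\end{align}
Letting $n\to\infty$, the infinite sum of these nonnegative expectations is finite. Two technical points need care here.
\begin{itemize}
\item $\xi$ may be a semimeasure if $\sum_i w(i)<1$. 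I would either normalize $w$, which leaves the conditional predictions unchanged, or note that the argument only needs dominance and nonnegativity of each one-step term. For the latter, the one-step term can be defined with the deficient mass assigned to an extra outcome.
\item Sequences with $\mu(\seqx)=0$ contribute nothing.
\end{itemize}

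Second, I pass from expectation to almost-sure convergence. Let $d_t(\seqx^t)$ be the $t$-th KL term. Since $\mathbb{E}_\mu \sum_t d_t <\infty$ and every $d_t\geq 0$, monotone convergence gives $\sum_t d_t <\infty$ $\mu$-almost surely, so $d_t \to 0$ almost surely. Pinsker's inequality gives $|\mu(x\mid\seqx^t)-\xi(x\mid\seqx^t)| \leq \sqrt{d_t/2}$ for both $x\in\bitset$. Hence $\pr^\mathcal{H}_w(\seqx^t,x) - \mu(x\mid\seqx^t) \to 0$ almost surely, uniformly in $x$, and in particular for $x=x_{t+1}$. This is exactly the statement of the theorem, read as the difference tending to $0$, because $\mu(x_{t+1}\mid\seqx^t)$ need not itself converge.

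I expect the main obstacle to be the bookkeeping in the first step: the chain-rule identity, conditional probabilities where $\xi$ or $\mu$ vanish, and the semimeasure case. None of this is deep, but it must be set up so the telescoping is exact. If that becomes messy, my fallback is the squared Hellinger distance instead of KL. It satisfies the analogous bound $\sum_t \mathbb{E}_\mu[h_t^2] \leq \ln(1/w(k))$ and avoids some of the support issues. An alternative route is the martingale argument: $\xi/\mu$ is a nonnegative $\mu$-supermartingale bounded below by $w(k)$, hence converges to a positive limit almost surely, and ratios of consecutive terms then tend to $1$. That route gives the result directly but only along the observed bits, which suffices here.
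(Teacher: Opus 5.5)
Your proof is correct, but it takes a different route from the paper. The paper's proof is a one-line appeal to the Blackwell--Dubins merging-of-opinions theorem, and it uses only that $\mu$ is absolutely continuous with respect to $\xi^{\mathcal{H}}_w$. You instead give Solomonoff's quantitative argument: the multiplicative dominance $\xi(\seqx)\geq w(k)\mu(\seqx)$, the chain rule for relative entropy, monotone convergence, and Pinsker's inequality.

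The paper's route gets a weaker hypothesis and a stronger conclusion. Absolute continuity alone suffices, and it yields that the conditional distributions of $\xi$ and $\mu$ over the \emph{entire} future merge in total variation $\mu$-almost surely, not merely the next-bit predictions. It does tacitly need $\xi$ normalized to a probability measure when $\sum_i w(i)<1$; the paper skips this point and you handle it correctly.

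Your route is elementary and self-contained, and it is quantitative. Pinsker gives $\sum_t \mathbb{E}_\mu\bigl[(\xi(x\mid\seqx^t)-\mu(x\mid\seqx^t))^2\bigr]\leq \tfrac12\ln(1/w(k))$, a bound on the expected total squared prediction error. It also links the two results of this section. For normalized $\xi$, your first display is exactly the $\mu$-expectation of the pointwise log-regret bound of Theorem~\ref{thrm:mixlogloss}, so consistency falls out of optimality. Finally, your route adapts directly to semimeasures; this is how Solomonoff's own convergence theorem for his semimeasure is proved. That matters for the later Solomonoff--Levin discussion, and Blackwell--Dubins, being stated for probability measures, does not directly cover it.

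Reading the conclusion as the difference tending to $0$ is the right interpretation of the paper's loose notation. Your supermartingale alternative is also sound: $\xi/\mu$ is bounded below by $w(k)$, so the ratio of consecutive terms tends to $1$.
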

\begin{proof}
This follows directly by the Blackwell-Dubins theorem \citeyearpar{BlaDub62ams} from the fact that every $\mu \in \mathcal{H}$ is \emph{dominated} by the mixture $\xi_w$ (meaning, for each event $A$, if $\mu(A)>0$ then also $\xi(A) > 0$).
\end{proof}

Now note that, since there are  countably many Turing machines, the class $\mathcal{H}^\mathrm{comp}$ of all computable measures is countable. So there exists an enumeration $\{ \mu_i \}_{i \in \mathbb{N}}$ of all computable measures, and we can define a corresponding mixture measure $\xi^\mathrm{comp}_w$ and predictor $\pr^\mathrm{comp}_w$, which, by Bayesian consistency, is reliable for all computable measures. A universally reliable prediction method: what is not to like?\footnote{One 
	might here already worry that the class of all computable measures, while countable,  is not actually enumerable in an effective manner. I come to this in sections \ref{ssec:diagarg}--\ref{ssec:semisemis} below.}

\subsection{Universal optimality}\label{sec:univoptcomp}
One philosophical concern is the identification of universal reliability with reliability for computability, that is, reliability for all the ways the sequence under investigation could be \emph{computably} generated. While certainly general, this is still a restriction, an inductive assumption (cf.\ \citealp[p.\ 77]{How00}). To maintain that this inductive assumption is truly universal is to commit to a stance that we will or can only ever encounter sequences that are computably generated, perhaps a ``physical Church-Turing thesis'' that nature is computable (see \citealp{CopSha20inc}); a stance that is hard to evaluate.  

\subsubsection{Optimality among all methods}
The statistician \citet[p.\ 1274]{Sch85aos} quips that ``Nature is not (to my knowledge) hampered by the same computability restrictions as statisticians are.''
But \emph{we}, statisticians, or generally developers of prediction methods, are plausibly bounded by computability. Or at least, the prediction methods we can develop are. 

Following this idea, we could explicitly restrict the possible prediction methods to prediction \emph{algorithms}, meaning (accepting the original Church-Turing thesis)  the Turing-computable predictors. We could then say that a predictor is universal, or universally \emph{optimal}, if it is in some sense as reliable as any prediction algorithm, any computable predictor.

This move bears a resemblance to Reichenbach's attempted vindication of induction (see, e.g., \citealp{Sal91e}). His idea, in our terms, was that we must accept that ``the inductive method'' could not be shown to be universally reliable, because we cannot a priori exclude the possibility of  outcome sequences on which the method would fail. Similarly, we cannot a priori exclude the possibility of incomputable outcome sequences, on which our universal mixture would fail. However, we \emph{may} be able to show that induction is successful, \emph{in those cases in which success is attainable at all}.

 Reichenbach's own proposal, in terms of the ``straight rule of induction'' and its success whenever there is a limiting relative frequency, was ultimately not convincing \citep{Gal11syn}. But perhaps the current proposal, that we might have a universally optimal method which is successful whenever \emph{some} method is (whenever successs is attainable at all), is more promising.

\subsubsection{Aggregating predictors}
And indeed, we can show something quite strong about the optimality of our  mixture predictor $\pr_w^\mathrm{comp}$. Note first that, in the same way that a mixture measure gives a mixture predictor (def.\ \ref{def:baymix} above), any measure $\mu$ gives a corresponding predictor $\pr$. Thus a class of measures corresponds to a pool of predictors.\footnote{Again
	(fn.\ \ref{fn:predsmeass}), this strictly speaking only holds for measures which are everywhere positive, for a corresponding predictor would ``go into a coma'' \citep[p.\ 305]{Kel96} on a probability-0 outcome. This is not a real issue here, since a predictor would simply drop out of the mixture over the pool when this happens to it (provided this does not happen to all predictors---which is excluded in the relevant case  of all computable elements).}  
	A mixture predictor $\pr^\mathcal{H}_w$ can also  be written directly as a predictor which \emph{aggregates} all the predictions of the corresponding pool of predictors:\footnote{See
	\citep[sect.\ 3.2.2]{Ste18phd}.}
\begin{align}
\pr^\mathcal{H}_w(\seqx,x) = \sum_{i \in \mathbb{N}} w_{t+1}(i) \pr_i(\seqx,x).
\end{align}
Here $\pr_i$ is the predictor corresponding to $\mu_i$, and the weight function is updated by 
\begin{align}
w_{t+1}(i) = \frac{w_t(i)\pr(\seqx_{t-1},x_t)}{Z},
\end{align}
with normalization factor $Z = \sum_{i \in \mathbb{N}} w_t(i)\pr(\seqx_{t-1},x_t)$ and initial weights $w_0=w$.

\subsubsection{Optimality}
Now one might hope that we could make the optimality of an aggregating predictors precise as follows. For each possible infinite outcome sequence, if \emph{some} predictor in the pool converges on the sequence, then the aggregator does so, too. Unfortunately, this is too strong.\footnote{Indeed,
	for the Solomonoff-Levin semi-predictors which I will discuss in section \ref{sec:sollev} below, and which are aggregators for a pool which contains themselves, it can be shown that for each pair of such semi-predictors there are sequences on which they do not converge on the same predictions \citep[thrm.\ 3.2]{Ste18phd}. See \citep{Nie20bjps} for a study of the ``deterministic'' convergence of (Bayesian) learners on all events.\label{fn:detconv} }\textsuperscript{,}\footnote{Alternatively, 
	it is possible to give a reinterpretation of the consistency theorem \ref{thrm:bayconsist} as a \emph{merger-of-opinion} result (see, e.g., \citealp{Hut15rsl}), saying that each predictor anticipates with certainty that an aggregator's predictions convergence to its own. I will not pursue this route, also because it becomes harder to make sense of this interpretation for the semi-predictors which I have to turn to in section \ref{sec:sollev} below (cf.\ \citealp[sect.\ 3.3.2.1]{Ste18phd}).}
	
We will instead take a different route, to at least get quite close to this idea.\footnote{This
	is an approach more broadly studied in \emph{prediction with expert advice} or \emph{competitive online learning} \citep{CesLug06}. Results from this field are at the core of the \emph{meta-inductive} approach to the problem of induction (\citealp{Sch19}; also see \citealp{Ste20epi}).} 
	For this route we need to 
introduce a \emph{loss function} $\ell_{\pr}$ which quantifies the cost or loss of a prediction by $\pr$  in light of the actual outcome $x$. A standard loss function in sequential prediction is the logarithmic or simply \emph{log} loss function, defined by
\begin{align}
\ell_{\pr}(\seqx,x):= -\log \pr(\seqx,x).
\end{align}
The \emph{cumulative} loss of predictor $\pr$ on sequence $\seqx$ is the sum of instantaneous losses of $\pr$'s predictions in the course of the generation of $\seqx$:
\begin{equation}
L_\pr(\seqx^s) := \sum_{t=0}^{s-1} \ell_{\pr}(\seqx^t,x_{t+1}).
\end{equation}
Finally, the cumulative \emph{regret}  of one predictor $\pr_1$ relative to another predictor $\pr_2$ on sequence $\seqx$ is the surplus loss that $\pr_1$ incurred in comparison to $\pr_2$,
\begin{equation}
R_{\pr_1,\pr_2}(\seqx) := L_{\pr_1}(\seqx) - L_{\pr_2}(\seqx).
\end{equation}

For the log loss, we can write the cumulative loss of predictor $\pr$, with  $\mu$ the corresponding measure, as
\begin{align}
L_{\pr}(\seqx^s) &= \sum_{t=0}^{s-1} -\log \mu(x_{t+1}\mid \seqx^t) = -\log \prod_{t=0}^{s-1} \mu_\pr(x_{t+1}\mid \seqx^t) = -\log \mu_\pr(\seqx^s),
\end{align} 
and hence the cumulative regret of $\pr_1$ relative to $\pr_2$, with corresponding $\mu_1$ and $\mu_2$, 
\begin{align}
R_{\pr_1,\pr_2}(\seqx) =  -\log \mu_{1}(\seqx) - \left( -\log \mu_{2}(\seqx) \right) = -\log \frac{\mu_{1}(\seqx)}{\mu_{2}(\seqx)}.
\end{align}

Let us write $R^\mathcal{H}_{w,i}$ for the regret function for the aggregating predictor $\pr^\mathcal{H}_w$ relative to predictor $\pr_i$ corresponding to $\mu_i \in \mathcal{H}$. Then, for the log loss, we can derive the following \emph{constant} bound on the aggregator's regret relative to any predictor in the pool, on \emph{each} outcome sequence. 

\begin{thm}[Optimality]\label{thrm:mixlogloss}
For aggregating predictor \emph{$\pr^\mathcal{H}_w$}, every \emph{$\pr_i$} in the pool corresponding to $\mathcal{H}$, and every finite sequence $\seqx$,
$$R^\mathcal{H}_{w,i}(\seqx) \leq -\log w(i).$$
\end{thm}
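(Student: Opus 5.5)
The plan is to reduce everything to the closed-form expression for log-loss regret derived just above the statement. There it is shown that for the log loss the cumulative regret of one predictor relative to another equals minus the log of the ratio of their corresponding measures. Applying this with $\pr_1=\pr^\mathcal{H}_w$, whose measure is the mixture $\xi^\mathcal{H}_w$ by Definition~\ref{def:baymix}, and $\pr_2=\pr_i$, whose measure is $\mu_i$, gives
\begin{align*}
R^\mathcal{H}_{w,i}(\seqx) = -\log \frac{\xi^\mathcal{H}_w(\seqx)}{\mu_i(\seqx)} .
\end{align*}
One preliminary check is needed: the telescoping identity $L_\pr(\seqx^s) = -\log \mu_\pr(\seqx^s)$ must hold for the mixture predictor. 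It does, because $\pr^\mathcal{H}_w(\seqx,x)$ is defined as the conditional $\xi^\mathcal{H}_w(\seqx x)/\xi^\mathcal{H}_w(\seqx)$. The product of these conditionals along $\seqx^s$ therefore telescopes to $\xi^\mathcal{H}_w(\seqx^s)/\xi^\mathcal{H}_w(\seqemp)$. Here I would use $\xi^\mathcal{H}_w(\seqemp)=\sum_j w(j)\le 1$, and $-\log$ of the unnormalised product only makes the loss smaller, so the inequality goes in the right direction. Alternatively, one can simply take $\xi^\mathcal{H}_w(\seqemp)=1$ when the weights sum to one.

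The key step is the trivial domination bound. Since every term of the mixture sum is nonnegative, dropping all terms except the $i$-th gives
\begin{align*}
\xi^\mathcal{H}_w(\seqx) = \sum_{j} w(j)\mu_j(\seqx) \geq w(i)\mu_i(\seqx).
\end{align*}
Dividing by $\mu_i(\seqx)$ and using that $-\log$ is decreasing yields $R^\mathcal{H}_{w,i}(\seqx) \le -\log w(i)$. This holds for every finite $\seqx$ and involves no probabilistic assumption.

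I expect the only real obstacle to be the edge case $\mu_i(\seqx)=0$. In that case $\pr_i$ has infinite cumulative loss on $\seqx$ (the ``coma'' case mentioned in the footnote), so its regret is $-\infty$, or the bound holds vacuously under the convention $L_{\pr_i}=\infty$. I would state this convention explicitly and then treat $\mu_i(\seqx)>0$ as the substantive case. In that case $\xi^\mathcal{H}_w(\seqx)>0$ too, by the displayed inequality, so all quantities are finite.
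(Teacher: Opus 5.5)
Your proof is correct and is the paper's own argument: rewrite the log-loss regret as $-\log\bigl(\xi^\mathcal{H}_w(\seqx)/\mu_i(\seqx)\bigr)$ and apply the domination bound $\xi^\mathcal{H}_w(\seqx)\ge w(i)\mu_i(\seqx)$. Your extra checks, that $\xi^\mathcal{H}_w(\seqemp)=\sum_j w(j)\le 1$ can only lower the mixture's actual loss and how to handle $\mu_i(\seqx)=0$, cover details the paper skips, and they go through as you describe; the only exception is the degenerate case $\xi^\mathcal{H}_w(\seqx)=0$, where the mixture predictor is itself undefined.
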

\begin{proof}
We have
\begin{align*}
R^\mathcal{H}_{w,i}(\seqx) =  -\log \frac{\xi_w(\seqx)}{\mu_i(\seqx)} 
\leq -\log \frac{w(i) \mu_i(\seqx)}{\mu_i(\seqx)} 
= -\log w(i) 
\end{align*}
because of the domination property $\xi_w (\seqx) = \sum_{j \in \mathbb{N}} w(j) \mu_j (\seqx) \geq w(i) \mu_i(\seqx)$.
\end{proof}

\subsubsection{The universal aggregator}
So in particular, for our universal aggregating predictor over the pool of all computable predictors, we have that the universal aggregator's log regret relative to any given predictor in the pool is bounded by a constant, on \emph{any}  outcome sequence, That is, accepting again the identification of predictors with prediction algorithms, with \emph{computable} predictors, we can say that the aggregating predictor is provably successful (has low loss) \emph{whenever any predictor is successful} (has low loss). So there we have it: a truly universally optimal prediction method.

\subsection{The diagonal argument}\label{ssec:diagarg}
Except that if we restrict the possible predictors to the computable predictors, as we did, \emph{then the aggregating predictor is itself no longer a possible predictor}.

This observation goes back to an argument of \citet{Put63inc1} against Carnap's \citeyearpar{Car50,Car52,CarSte59} program of inductive logic. Putnam took Carnap to be likewise after something like a universal formal method for inductive inference, and sought to demonstrate once and for all the futility of this endeavor  by giving a proof that two plausible conditions on such a universal method are mutually exclusive.\footnote{For 
	more details, see \citep[ch.\ 1]{Ste18phd}.}

Putnam's first condition was that, as a rational reconstruction of our inductive practices, Carnap's inductive method should at least be able to eventually detect all computable patterns. This could be interpreted, in our terms, as a reliability condition, but also as an optimality condition, insofar the idea is that the method should pick up any pattern any of our possible inductive practices could. I will simply state the condition as follows, for both interpretations.

\setcounter{enumi}{1}
\begin{enumerate}[label=(\Roman*)]
\item\label{cond1star} $\pr$ is reliable (optimal) for computability.
\end{enumerate}
His second condition was precisely that, in order to still count as an actual method at all, it should satisfy some condition of computability. I simply state it as:
\begin{enumerate}[label=(\Roman*)]
\setcounter{enumi}{1}
\item\label{cond2star} $\pr$ is computable.
\end{enumerate}

Putnam's original diagonalization proof, based on G\"odel's, was somewhat involved, but for our two conditions the argument is straightforward (also see \citealp[p.\ 702]{Kel16inc}).

\begin{thm}[Putnam's diagonalization]\label{thrm:putdiag}
There can be no \emph{$\pr$} satisfying \emph{\ref{cond1star}} and \emph{\ref{cond2star}}.
\end{thm}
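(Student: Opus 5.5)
The plan is a direct diagonalization. Suppose $\pr$ satisfies \ref{cond2star}, so $\pr$ is computable. From $\pr$ we build an infinite binary sequence $\seqx^\omega$ that is computable but on which $\pr$ fails. This violates \ref{cond1star} under both readings, reliability and optimality.

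\textbf{Construction.} Define $\seqx^\omega$ inductively, starting from $\seqx^0 = \seqemp$. Given $\seqx^t$, set
\[ x_{t+1} := \begin{cases} \bitz & \text{if } \pr(\seqx^t,\bito) \geq 1/2, \\ \bito & \text{otherwise.} \end{cases} \]
So at each step the sequence takes the outcome that $\pr$ considers less likely, or $\bitz$ in the case of a tie. As a result, $\pr(\seqx^t,x_{t+1}) \leq 1/2$ for every $t$.

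\textbf{Computability of the sequence.} To compute $x_{t+1}$, one computes $\seqx^t$ recursively and then evaluates $\pr(\seqx^t,\bito)$, which is possible because $\pr$ is computable.

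Here is the one delicate point, and the main obstacle. If "computable predictor" means the probabilities are computable reals, then the comparison with $1/2$ is not decidable in general. I would fix this by allowing a margin. Compute a rational approximation $q_t$ to $\pr(\seqx^t,\bito)$ within $1/8$. Put $x_{t+1}=\bitz$ if $q_t \geq 1/2$, and $x_{t+1}=\bito$ otherwise. This rule is decidable, and it guarantees $\pr(\seqx^t,x_{t+1}) \leq 5/8$ for all $t$. If instead $\pr$ takes rational values that are exactly computable, no margin is needed.

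\textbf{Failure of reliability.} The sequence $\seqx^\omega$ is computable, so it is given by a computable measure that concentrates on it, with $\mu(x_{t+1}\mid\seqx^t)=1$ for all $t$. Reliability for this $\mu$ would require $\pr(\seqx^t,x_{t+1}) \to 1$ along $\seqx^\omega$. But these probabilities stay at most $5/8$, so $\pr$ does not converge on $\seqx^\omega$ and \ref{cond1star} fails in the reliability sense.

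\textbf{Failure of optimality.} Take the computable predictor $\pr^*$ that predicts the next bit of $\seqx^\omega$ with probability $1$, and splits evenly on any history off the sequence. Its cumulative log loss on every initial segment $\seqx^s$ is $0$. Meanwhile $L_\pr(\seqx^s) \geq s\log(8/5)$. Hence the regret $R_{\pr,\pr^*}(\seqx^s)$ grows without bound, and no bound of the kind in Theorem \ref{thrm:mixlogloss} can hold.

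In both readings we have a contradiction, so no $\pr$ satisfies \ref{cond1star} and \ref{cond2star} together.
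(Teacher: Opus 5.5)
Your proof is correct and uses the same diagonalization as the paper: at each step take the outcome to which $\pr$ gives probability at most $1/2$. This yields a computable sequence on which $\pr$ never converges and has unbounded log regret against the computable predictor that follows the sequence. Your $1/8$-margin device fixes a point the paper glosses over: for real-valued computable $\pr$ the exact comparison with $1/2$ is undecidable, and the margin still guarantees $\pr(\seqx^t,x_{t+1})\le 5/8$.
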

\begin{proof}
Suppose there is such a $\pr$. We construct a computable sequence $\seqx^\omega$ on which $\pr$ will never converge (on which it will have unbounded log loss and hence log regret relative to the computable predictor corresponding to $\seqx^\omega$), as follows. For each $t$, compute and pick the (a) $x_{t+1}$ for which $\pr(\seqx^t,x_{t+1}) \leq 0.5. $
\end{proof}

\section{The Solomonoff-Levin approach}\label{sec:sollev}
Solomonoff, in his landmark paper \citeyearpar{Sol64ic}, puts forward a number of ``models'' for prediction. One of those is precisely our mixture over computable measures, and he also notes this mixture is itself not computable (ibid., p.\ 21). His other proposed models use a different idea, and give a different kind of prediction method. However, Solomonoff's own presentation is not fully rigorous, and the appropriate formal framework was only made formally precise by Kolomogorov's student Levin \citeyearpar{ZvoLev70rms,Lev71phd}. It is this framework which I will now develop first.\footnote{This
	presentation draws from \citep[ch.\ 2]{Ste18phd}, which contains more details. For other  presentations of this framework, see \citep{SheUspVer17,LiVit19}.}   

\subsection{Semi-computable semi-measures}\label{ssec:semisemis}
It is instructive to view the motivation for this framework as an explicit attempt to avoid diagonalization, and to first look at the analogous move in the context of computable functions on the natural numbers.

\subsubsection{Computable functions}
It is an elementary fact of computability theory that the class of all \emph{total computable} (t.c.)\ functions $f: \mathbb{N} \rightarrow \mathbb{N}$ is diagonalizable (see, e.g., \citealp{Soa16}). 
 More precisely: if there were to exist a \emph{universal} t.c.\ function $\mathring{f}$ that can emulate every other t.c.\ function (meaning that $\mathring{f}(i,x)=f_i(x)$ for a listing $\{f_i\}_{i \in \mathbb{N}}$ of all t.c.\ functions), then we could infer a \emph{diagonal function} $g$ (say $g(x):=\mathring{f}(x,x)+1$) that is t.c.\ yet distinct from every single $f_i$ (because $g(i)=f_i(i)+1\neq f_i(i)$ for all $i$), which is a contradiction. 
  This also means that there can be no listing $\{f_i\}_{i \in \mathbb{N}}$ of all t.c.\ functions that is itself computable: the class of t.c.\ functions is not computably enumerable.

\subsubsection{Partial computable functions}
However, the standard model of the Turing machine, which characterizes the Turing-computable (we simply say: computable) functions, is not restricted to total functions. Namely, it is possible that a Turing machine fails to halt on a particular input, so that the function it computes, for this input, is undefined. 

The Turing machines therefore correspond to the larger class of \emph{partial computable} (p.c.)\ functions $\varphi$. This expansion defeats the foregoing diagonalization argument (consider: what if $\varphi_i(i)$ is undefined?). Indeed, the class of all p.c.\ functions \emph{is} computably enumerable, which can be understood most directly from the observation that we can effectively encode into integer indices all possible valid (descriptions of instructions sets for) Turing machines. In particular, we can define a universal Turing machine (or indeed infinitely many such machines, for each such encoding). Namely, given a computable enumeration $\{ \varphi_i\}_{n \in \mathbb{N}}$ of all p.c.\ functions, we can design a machine which takes an index $i$ and input $x$, recovers $\varphi_i$, and computes $\phi_i(x)$. Such a machine computes a universal p.c.\ function $\mathring{\varphi}: i,x \mapsto \varphi_i(x)$.

%
%
%

\subsubsection{Computable measures}
We return to the probability measures over Cantor space, our outcome space of infinite binary sequences. Let a \emph{transformation} $\lambda_F$ of the uniform measure by Borel function $F: \binf \rightarrow \binf$ be defined by $\lambda_F(A) = \lambda(F^{-1}(A))$ for each measurable event $A$. Every Borel measure $\mu$ on Cantor space can be obtained as a transformation of $\lambda$ by some Borel function. 

In order to now impose the restriction of computability on this characterization of measures, we need to express these transformations in terms of functions on \emph{finite} sequences. Skipping various details, we are led to computable functions that have a certain property of monotonicity, which are precisely given by a type of Turing machines that are known as \emph{monotone} machines. 

One can visualize a monotone machine as operating on a steady stream of input symbols, producing a
 stream of output bits in the process. Say that $\seqx$ is an $M$-\emph{description} for $\seqy$ if monotone $M$, when fed $\seqx$ for input, produces a sequence with initial segment $\seqy$. The set $D_M(\seqy)$ of \emph{minimal} $M$-descriptions for $\seqy$ are all the sequences $\seqx$ such that $\seqx$, but no initial segment of $\seqx$, is an $M$-description for $\seqy$.

A transformation $\lambda_M$ of the uniform measure by the monotone Turing machine $M$ is then defined by 
\begin{align}\label{eq:trans}
\lambda_M(\seqy) := \sum_{\seqx \in D_M(\seqy)} \lambda(\seqx).
\end{align}

The computable measures are given by the transformations by  monotone machines $M$ that are ``almost total.'' This means that for $\lambda$-almost all infinite sequences $\seqx^\omega$ (i.e., with $\lambda$-probability 1) $M$ will, for increasingly long initial segments of $\seqx^\omega$, return an increasingly long output sequence, thus producing an infinite $\seqy^\omega$.

%

\subsubsection{Semi-computable semi-measures}
In general, a monotone machine can fail to produce such an unending sequence. It can be partial in the sense that for some $\seqx$, and for all extensions of $\seqx$, it produces a finite sequence $\seqy$ and no more than that $\seqy$. This translates into the possibility that, in the definition of $\lambda_M$, measure for extensions of $\seqy$ ``gets lost,'' and $\lambda_M(\seqy)$ is strictly greater than $\lambda_M(\seqy\bitz)+\lambda_M(\seqy\bito)$.

That is to say, the transformations $\lambda_M$ for monotone machines $M$ are \emph{semi-measures} $\nu$, which only satisfy the inequalities
\begin{equation}\label{eq:semimeas} 
\begin{aligned} 
\nu(\seqemp) &\leq1; \\
\nu(\seqx\mathtt{0})+\nu(\seqx\mathtt{1}) &\leq \mu(\seqx) \textrm{ for all }\seqx \in \bfin.
\end{aligned} 
\end{equation}

These semi-measures moreover satisfy a weaker notion of computability. Namely, the transformations $\lambda_M$ for all $M$ are precisely the (\emph{lower}) \emph{semi-computable} semi-measures, which are the functions $\nu$ satisfying \eqref{eq:semimeas} and such that there is a computable $f_\nu$ approximating $\nu$ from below. That is, for all $\seqx \in \bfin$, all $s \in \mathbb{N}$,
\begin{equation}\label{eq:semimeas} 
\begin{aligned} 
  f_\nu(\seqx,s) &\leq  f_\nu(\seqx,s+1); \\
\lim_{s\rightarrow \infty} f_\nu(\seqx,s) &= \nu(\seqx).
\end{aligned} 
\end{equation}


\subsubsection{Universal semi-computable semi-measures}
Just like the usual Turing machines, the class of all monotone machines can be computably enumerated, and we can define \emph{universal} monotone machines $U$. This in turn allows us to define \emph{universal} transformations $U$. Such a universal transformation for enumeration $\{ M_i \}_{i \in \mathbb{N}}$ of all monotone machines can be written as a mixture of all the transformations,
\begin{align}
\lambda_U(\seqy) = \sum_{i \in \mathbb{N}} \lambda(\seqx_i) \lambda_{M_i}(\seqx),
\end{align}
with $\{ \seqx_i \}_i$ some encoding of the integers. From this it follows directly that a universal transformation is a \emph{universal} semi-computable semi-measure, that is, a measure $\mathring{\nu}$ satisfying, for all $\nu$, for some constant $c_\nu \in (0,1)$, 
\begin{align*}
\mathring{\nu}(\seqx) \geq c_\nu \cdot \nu(\seqx) \textrm{ for all }\seqx \in \bfin.
\end{align*}

\subsubsection{The representation theorem}
There are in fact a number of different ways of  defining the universal transformations, of characterizing the same class $\mathcal{SL}$ of universal semi-computable semi-measures.

First of all, instead of the uniform measure $\lambda$, we could have used any other continuous computable measure $\mu$ to define the universal transformations.\footnote{A
	measure is continuous if it does not have any \emph{atoms}: single infinite sequences which are assigned positive probability.}

Second of all, we could have directly used the definition of a Bayesian mixture $\xi^\mathrm{semi}_w$ over the class $\mathcal{H}^\mathrm{semi}$ of all semi-computable semi-measures.

Following good philosophical tradition \citep{Sup02}, we can collect and summarize these observations, linking ostensibly very different representations of the same objects, under the header of a  \emph{representation theorem}. 

\begin{namedthm}{Representation Theorem}[\citealp{WooSunHut13incoll,Ste17tocs}]\label{thrm:repr}
For all continuous computable measures $\mu$, and acceptable enumerations $\{\nu_i\}_i$ of $\mathcal{H}^\mathrm{semi}$ to define $\xi_w^\mathrm{semi}$,
\begin{align}
\mathcal{SL} = \{ \mu_U \}_U = \{ \xi^\mathrm{semi}_w\}_w,
\end{align}
with $U$ ranging over the universal monotone machines compatible with $\mu$ and $w$ ranging over the computable weight functions.\footnote{For
	the definition of \emph{acceptable enumeration} of the semi-computable semi-measures, which matches the standard definition of acceptable enumeration of the p.c.\ functions \citep[p.\ 21]{Soa16}, see \citep[p.\ 194]{Ste18phd}. The requirement that the machines be \emph{compatible} with $\mu$ has to do with the need to exclude that encodings used by $U$ for the machines receive non-zero probability from $\mu$ (ibid., p.\ 190).} 
\end{namedthm}
\begin{proof}
This needs some space. See \citep[sect.\ 2.2]{Ste18phd}.
\end{proof}

\subsubsection{Universality included}
Theorem \ref{thrm:putdiag} showed that we can diagonalize the class of computable measures: any measure which dominates all computable measures must itself fall outside of the class. By expanding the class of computable measures to the class of semi-computable semi-measures, we have escaped diagonalization. Namely, this class contains universal elements, which dominate all elements in the class, but which are still themselves in the class.


\subsection{The Solomonoff-Levin semi-predictors}
The elements of class $\mathcal{SL}$ are the Solomonoff-Levin semi-measures. The Solomonoff-Levin semi-predictors are simply the semi-predictors corresponding---as before, by conditionalization---to the Solomonoff-Levin semi-measures.

\


%
%
%

\subsection{Diagonalization strikes again}
Could we say that the Solomonoff-Levin semi-predictors are universally optimal predictors?

To retrace the same kind of reasoning as in section \ref{sec:univoptcomp}, which started with the identification of the possible predictors with those predictors that are computable, we would now have to weaken the required level of computability, and still allow  as possible predictors elements which are only semi-computable. The hope would be that we could then unite versions of Putnam's two requirements on a universal predictor: a predictor that is universal for the class of possible predictors, while still being a legitimate predictor itself.

On a first glance, it seems that with the Solomonoff-Levin semi-predictors we have exactly that. It follows directly from the fact that the Solomonoff-Levin semi-predictors are aggregating predictors for the semi-predictors corresponding to the semi-computable semi-measures that the Solomonoff-Levin semi-predictors are optimal among the semi-predictors corresponding to the semi-computable semi-measures. Moreover, the Solomonoff-Levin semi-predictors are themselves semi-predictors corresponding to semi-computable semi-measures.

There is, however, a catch. The weakened desideratum of computability---semi-computability---applies to the underlying probability measures. It does not necessarily carry over to the corresponding semi-predictors. Indeed, the Solomonoff-Levin semi-\emph{predictors} are no longer semi-computable.\footnote{This
	was first shown by \citep[thrm.\ 6]{LeiHut15alt}. The following more straightforward proof is from \citep{Ste19erk}.} 

\begin{thm}
The Solomonoff-Levin semi-predictors are not semi-computable.
\end{thm}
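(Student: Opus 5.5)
The plan is a diagonal argument in the style of Theorem~\ref{thrm:putdiag}, now exploiting only the lower approximation that semi-computability provides. Suppose, for contradiction, that some Solomonoff-Levin semi-predictor $\pr$ is semi-computable. This means there is a computable $f$, nondecreasing in $s$, with $\lim_s f(\seqx,x,s)=\pr(\seqx,x)$. Let $\xi\in\mathcal{SL}$ be the corresponding universal semi-measure.

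We then construct a computable sequence $\seqx^\omega$ on which $\pr$ does badly, as follows. At stage $t$, with $\seqx^t$ already fixed, run the approximations $f(\seqx^t,\bitz,s)$ and $f(\seqx^t,\bito,s)$ in parallel for $s=0,1,2,\dots$. Stop at the first $s$ and bit $x$ for which $f(\seqx^t,x,s)>1/2$; declare $x_{t+1}$ to be the \emph{other} bit. If no such $s$ exists, the construction would never proceed, so we must guarantee that the search always terminates. Since $\pr(\seqx,\bitz)+\pr(\seqx,\bito)=(\xi(\seqx\bitz)+\xi(\seqx\bito))/\xi(\seqx)$ may be strictly less than $1$, termination is not automatic.

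The key step, and the main obstacle, is to exclude the case where at every stage both conditional probabilities stay $\le 1/2$, or to modify the threshold so that the search always halts. I would try the following first: use a threshold $\epsilon_t$ that varies with $t$ and is chosen so that $\sum_t \epsilon_t$ converges, or else pick a threshold depending on a dominance argument. Choose the next bit $x$ so that $\pr(\seqx^t,x)$ is provably small once the other bit's approximation has exceeded $1/2$. Then $\pr(\seqx^t,x_{t+1})\le 1-\pr(\seqx^t,\text{other})<1/2$, using the semi-measure inequality.

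For termination, I would instead argue by cases on the sequence. If at some stage neither approximation ever exceeds $1/2$, define the sequence as a fallback: the construction is partial, but we can make it total by interleaving, i.e.\ by always committing to a default bit after a bounded search and handling that case separately. The cleaner route is a contradiction. Suppose the search fails to terminate at stage $t$. Then $\pr(\seqx^t,\bitz),\pr(\seqx^t,\bito)\le1/2$. Consider the computable measure concentrated on the computable continuation $\seqx^t\bitz\bitz\dots$; this gives no immediate contradiction. So I would instead prove that the sequence produced is computable whenever the search halts at every stage, and otherwise, it has a finite prefix beyond which we extend by the constant sequence.

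In either case, we obtain a computable sequence $\seqx^\omega$ with $\pr(\seqx^t,x_{t+1})\le 1/2$ for infinitely many $t$. Along such a sequence, $\xi(\seqx^t)=\prod\pr(\cdot)$ tends to $0$, or at least $\pr$ fails to converge to $1$ on the true next bits. But the point measure on $\seqx^\omega$ is computable, hence semi-computable, and so by universality $\xi(\seqx^t)\ge c\cdot 1$ for all $t$. Equivalently, by Theorem~\ref{thrm:mixlogloss}, the cumulative log loss is bounded by $-\log c$. This forces $\pr(\seqx^t,x_{t+1})\to1$, which contradicts the infinitely many stages where $\pr(\seqx^t,x_{t+1})\le 1/2$.
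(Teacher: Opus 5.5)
Your proposal has a genuine gap, and it is the one you flag yourself: termination of the search at a fixed prefix. The case where both $\pr(\seqx^t,\bitz)$ and $\pr(\seqx^t,\bito)$ are $\le 1/2$ cannot be excluded. The semi-measure may leak mass at that prefix, and dominance constrains $\xi$ only along whole computable sequences in the limit, not at any particular prefix. None of your fallbacks repairs this:
\begin{itemize}
\item Varying thresholds are not worked out and face the same problem.
\item A default bit committed after a bounded search gives no guarantee that the actual bit gets probability $\le 1/2$. You also give no reason why non-default stages should occur infinitely often, since the lower approximations need not converge uniformly in the prefix.
\item In the ``finite prefix, then constant continuation'' case, the sequence is computable but has only finitely many stages where you know $\pr(\seqx^t,x_{t+1})\le 1/2$.
\end{itemize}
So your claim ``In either case, we obtain \dots\ for infinitely many $t$'' is false in the second case. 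Your closing dominance argument (which is fine) therefore covers only the case where every search halts.

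The missing idea, which is how the paper's proof goes, is to search along a computable continuation of the current prefix rather than at the prefix itself, because reliability there guarantees success. Given the current prefix $\seqx$, dovetail the lower approximations of $\pr(\seqx\bito^n,\bito)$ over all $n$ and all stages $s$, until some $n$ with $\pr(\seqx\bito^n,\bito)>1/2$ is certified. This must happen. The point measure on $\seqx\bito^\omega$ is computable, so universality gives $\xi(\seqx\bito^n)\ge c>0$ for all $n$. Since $\xi(\seqx\bito^n)$ is nonincreasing in $n$, this forces $\pr(\seqx\bito^n,\bito)\to 1$. Then extend the prefix to $\seqx\bito^n\bitz$, where $\pr(\seqx\bito^n,\bitz)\le 1-\pr(\seqx\bito^n,\bito)<1/2$, and repeat. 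The result is a computable sequence $\bito^{t_0}\bitz\bito^{t_1}\bitz\cdots$ with infinitely many positions where the actual next bit gets probability $<1/2$, and your final argument then yields the contradiction. Note that the object you considered and discarded, the point measure on a computable continuation of $\seqx^t$, is exactly the right tool. It is needed to make the search terminate, not to derive a contradiction at stage $t$.
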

\begin{proof}
Suppose that $\pr^\mathrm{SL}_w(\seqx,x) := \xi^\mathrm{semi}_w(x \mid \seqx)$, for some $\xi_w^\mathrm{semi} \in \mathcal{SL}$, is semi-computable. 
We can now construct a computable infinite sequence $\seqx^\omega$ as follows. Start calculating $\pr_\mathrm{SL}(\bito^t,\bito)$ from below in dovetailing fashion for increasing $t \in \mathbb{N}$, until a $t_0$ such that $\pr_\mathrm{SL}(\bito^{t_0},\bito )>0.5$ is found (since $\bito^\omega$ is obviously computable, and $\pr_\mathrm{SL}$ is in particular reliable for all computable sequences, such $t_0$ must exist). Next, calculate $\pr_\mathrm{SL}(\bito^{t_0}\bitz\bito^t,\bito)$ for increasing $t$ until a $t_1$ with $\pr_\mathrm{SL}(\bito^{t_0}\bitz\bito^{t_1},\bito)>0.5$ is found (again, $t_1$ must exist because $\bito^{t_0}\bitz\bito^\omega$ is computable). Continuing like this, we obtain a list $t_0,t_1,t_2,...$ of positions; let $\seqx^\omega := \bito^{t_0}\bitz\bito^{t_1}\bitz\bito^{t_2}1 \dots $. Sequence $\seqx^\omega$ is computable, but by construction the predictions of $\pr^\mathrm{SL}_w(\seqx,x)$ will not converge on this sequence, contradicting the predictor's reliability for computable sequences.
\end{proof}

%
%
%
%

The Solomonoff-Levin semi-predictors are merely \emph{limit-computable}: there is a computable approximation function $f(\seqx,x,s)$ which for each $\seqx,x$ converges to the predicted probability for $x$ given $\seqx$. The difference with semi-computability is that there we could be sure that the approximation at $s+1$ is more accurate than the one at $s$; with limit-computability we lose even that. 

Could we nevertheless not try to identify the possible prediction methods with all limit-computable predictors and derive a universally optimal predictor for that class? No, we cannot: at this level we can repeat the original diagonal proof of theorem \ref{thrm:putdiag}, with the Halting set as an oracle. In general, with so-called ``relativizations'' (computability-theoretic terminology for using incomputable sets as oracles) the two diagonal proofs given are enough to show that this will not work at any level in the arithmetical hierarchy of computability levels.

\subsection{What have we gained?}
So the best we can say at this point is that we have universally optimal prediction methods, the Solomonoff-Levin semi-predictors, if we identify the possible prediction methods with those semi-predictors which correspond to semi-computable semi-measures.

\subsubsection{The possible prediction methods}
The obvious discomfort is  that this hardly looks like a natural characterization of possible prediction methods. 

The very notions of semi-measure and semi-predictor already raise, to say the least, questions of interpretation. There have been attempts in the literature to offer such interpretations (\citealp{Cam13incoll,MarEveHut16inproc,WyeHut26inproc}), but in the end, in line with how we introduced it here, the notion of semi-measure is primarily a formal necessity to arrive at a weaker notion of computability, semi-computability, which allows for universal elements.\footnote{Interpretations
	that the ``probability leak'' in semi-measures represents something like the probability of the end of the universe are in tension with the idea that our framework of sect.\ \ref{ssec:seqpred} is universal because we can encode anything back into a binary prediction problem (fn.\ \ref{fn:univsett} above): why not encode this possibility in the binary sequence?}\textsuperscript{,}\footnote{Solomonoff
	also described a \emph{normalization} of the Solomonoff-Levin semi-measure, resulting again in a proper measure \citep[sect.\ 4.5.3]{LiVit19}. This measure is merely limit-computable \citep{LeiHut15alt}, which raises the question why we would not be happy with a straightforward mixture over all limit-computable measures.}

The notion of semi-computability (or indeed limit-computability), as an outer constraint on possible predictors, might look plausible insofar such predictors are still effectively ``within reach'': we can still approximate them by computable means. But consider what using a strictly semi- or limit-computable semi-predictor would actually look like (cf.\ \citealp[p.\ 104]{KelJuhGly94inc}, on p.c.\ predictors). 

In order to actually issue predictions in any given round, we cannot wait forever for increasingly accurate approximations of our method's predictions. So we have to make a decision when to go with the current approximation. Now either these decisions can be formalized in a computable procedure, in which case the resulting semi-predictor cannot actually be universally optimal or reliable anymore. Or our decisions are somehow incomputable, putting the lie to the idea that computable approximability is still somehow within effective reach.\footnote{I
	return to the idea that we can at least design computable approximations to the Solomonoff-Levin semi-predictors in sections \ref{sssec:approxs}--\ref{sssec:approxsskep} below.}

\subsubsection{Back to reliability} 
We could of course retreat from our optimality interpretation altogether, upholding only the second of Putnam's conditions. Namely, the Solomonoff-Levin semi-predictor, while incomputable, is still universally reliable: it converges on all computable sequences (indeed, computable measures).

But if we drop any computability requirements on a universal prediction method, then it is not clear why we would not be satisfied with the much more straightforward definition of section \ref{sect:univrel}.  Directly defined from a mixture measure over all computable measures, this also gives an actual predictor instead of a semi-predictor. Moreover, the inductive assumption as encoded in this predictor, that the data is generated from a computable measure, is seemingly more natural than the inductive assumption of the Solomonoff-Levin semi-predictors, that the data is generated from some semi-computable semi-measure.\footnote{Perhaps
	the most natural interpretation of the latter is that we assume that the data is generated by some monotone machine on random input (\citealp[p.\ 357]{LiVit19}; \citealp[p.\ 2]{GMGHODCRWMAV24icml}). This does introduce the possibility that at some finite point no further outcomes arrive, which changes the learning problem as stated in section \ref{ssec:seqpred}. Moreover, it is not fully clear how to make sense of a predictor's convergence on a semi-measure. Also see \citep[sects.\ 3.2.4.3, B.2.2.4]{Ste18phd}.} 

\subsubsection{The sweet spot?}
Perhaps we could still maintain that the Solomonoff-Levin definition at least strikes the best possible trade-off between Putnam's two incompatible conditions. The prediction methods are not computable, but it comes close; their reliability or optimality is a bit too widely cast, but not overly so.

But then, of course, one could just as well say that the definition is unsatisfying on both counts, because it undershoots the one requirement and overshoots the other. Indeed, if the first is supposed to be a \emph{requirement}, then the conclusion of our discussion so far should not be that we can give a theoretical definition of universal prediction methods that at least comes quite close to what we want. The conclusion should simply be that we cannot get there: that there can be no universal prediction method.

\subsection{Still, what have we gained?}
Let us for a moment disregard the previous concerns and  accept the Solomonoff-Levin semi-predictors as bona fide universal (universally reliable) predictors. What can we learn from the Solomonoff-Levin definition and its properties, especially with an eye to actual machine learning?

\subsubsection{Occam's razor}
Much has been made of the association of Solomonoff's original definition with data compression and Occam's razor, the methodological principle to prefer simplicity in inductive inference.

In the definition $\eqref{eq:trans}$ of a transformation of the uniform measure $\lambda$ via a monotone machine $M$, we see that $\lambda_M(\seqx)$ is higher as the $M$-descriptions $\seqy \in D_M(\seqx)$ have higher uniform probability $\lambda(\seqy)=2^{-|\seqy|}$, that is, as the $M$-descriptions are \emph{shorter}. This gives rise to the interpretation that the probability $\lambda_M(\seqx)$ is higher as $\seqx$ is more \emph{compressible} by machine $M$.\footnote{More 				generally, there exists a formal correspondence between description length functions and probability distributions \citep{Gru07,SheUspVer17}, where shorter descriptions correspond to higher probabilities. This, importantly, highlights that talk about ``compressibility'' and ``simplicity'' is relative to some particular description length function in the exact same way that ``high probability'' is relative to some particular probability distribution.\label{fn:codesprobs}   }

Moreover, if we use a universal monotone machine $U$, then the corresponding transformation $\lambda_U$---a Solomonoff-Levin semi-measure---arguably uses a universal notion of compressibility. Indeed, the definition is closely related to the definition of $\seqx$'s monotone \emph{Kolmogorov complexity},
\begin{align}
Km_U(\seqx) = \min\{ |\seqy|: \seqy \in D_U(\seqx)\}, 
\end{align}
the length of $\seqx$'s single shortest description.\footnote{The
	definition of monotone Kolomogorov complexity does not, however, \emph{quite} correspond to that of the Solomonoff-Levin semi-measure, not even in the ``up to constant'' sense of the invariance theorem. See \citep[sect.\ A.3.2]{Ste18phd} for details. } 
	To see that this notion of compressibility is in a certain sense universal, note that if some (long) sequence $\seqx$ has a shorter description (is compressible) via \emph{some} algorithm (via \emph{some} machine $M$), then, up to a constant (the index for $U$ to emulate $M$), it also has a shorter description (is compressible) via $U$.  

Furthermore, since different universal machines can also emulate each other, the choice of particular universal machine is also inconsequential, up to a constant.
	The latter fact is known as the \emph{invariance theorem}, and is taken to provide the notion of Kolmogorov complexity a certain objectivity or robustness.\footnote{Though
	the 	``up to a constant'' is not innocuous. It implies that if we first fix two different machines, then, as we investigate an increasingly long sequence, the two average Kolmogorov complexities (i.e., complexities divided by sequence length) via the two different machines will convergence; in that precise sense, the choice is inconsequential. However, if we switch perspective to the question of the complexity of a given particular sequence, then the difference between the Kolmogorov complexities via two different subsequent choices of machine  can be literally anything (also see \citealp[sect.\ 5.2.2]{Ste18phd}). \label{fn:invar} }  Solomonoff's first proposal of something like Kolmogorov complexity, including an indication of this invariance, makes him one of the three independent founders (with \citealp{Kol65pit,Cha69jacm}) of algorithmic information theory. 
	

On this interpretation that a Solomonoff-Levin semi-measure $\lambda_U$ assigns higher probability to more compressible (in that sense, \emph{simpler}) sequences, we also have that $\lambda_U(x \mid \seqx) = \lambda_U(\seqx x)/\lambda_U(\seqx) $ is higher for the one-bit extension $\seqx x$ of $\seqx$ which is the more compressible (the more simple). Thus we have that the Solomonoff-Levin semi-predictor implements a certain simplicity bias, in line with Occam's razor.

\subsubsection{Some skepticism}\label{ssec:occamskep}
It should first be noted that the invariance of Solomonoff's ``algorithmic probability,'' i.e., between the Solomonoff-Levin semi-measures, is a long way from saying that two Solomonoff-Levin semi-measures make the same predictions.\footnote{Echoing
	footnote \ref{fn:invar}, for any sequence, the ratio between the probabilities assigned to this sequence by the subsequent choice of two different Solomonoff-Levin semi-measures can be literally anything. We at most have that for two fixed choices of machine, as we investigate an increasingly long length-$t$ sequence, the $t$-th roots of the probability assignments converge (see \citealp[sect.\ 3.2.5.6]{Ste18phd}). This does not even imply that on every sequence the \emph{predictions} converge (fn.\ \ref{fn:detconv} above). } 
	The problem of \emph{variance}, or the ``subjectivity'' involved in the choice of universal machine (or equivalently, by theorem \ref{thrm:repr}, the choice of computable weight function for a universal mixture) is widely perceived to be a main challenge for Solomonoff's theory (see \citealp[sect.\ 3.2.5]{Ste18phd}).	
	
Turning to the association with Occam's razor, the question is how robust this simplicity interpretation really is. Theorem \ref{thrm:repr} also shows that the Solomonoff-Levin semi-measures need not be written as universal transformations of the uniform measure. In particular, we could use any continuous measure $\mu$ instead of $\lambda$, losing this clear-cut interpretation of description lengths and compressibility.\footnote{The
	same holds for the association, via the equal uniform probabilities of equal-length descriptions, with the principle of indifference \citep{Sol64ic,RatHut11e}.\label{fn:indiff}}

This huge flexibility in how we can represent the Solomonoff-Levin semi-measures points at something more important. Namely, it is not clear what work the perceived simplicity bias actually does. The distinctive goodness property of the Solomonoff-Levin semi-predictors is their universal reliability. The property of the Solomonoff-Levin semi-predictors which provably makes for their universal reliability is their universal dominance---a property which, of course, does not depend on how exactly we represent them. Thus if the question is what the theory developed here can tell us about what makes for good prediction, then the answer is: universal dominance. It is not clear what Occam's razor adds to this answer.\footnote{A 
	similar point is made in (\citealp{Ste16pos}; \citeyear[sect.\ 5.1]{Ste18phd}).}

This in turn ties in with something more important still. Presumably, the use of inferring methodological principles from the Solomonoff-Levin universal prediction methods is ultimately to give us guidance in the design or understanding of actually implementable or implemented machine learning algorithms. So in particular the use of the association of Occam's razor is presumably to give some justification for implementing a simplicity bias in learning algorithms, or some explanation why learning algorithms implementing a certain simplicity bias work well. But is there really such guidance to be had from the Solomonoff-Levin definition?

\subsubsection{A gold standard}\label{sssec:approxs}
Their formal incomputability disqualifies the Solomonoff-Levin semi-predictors from actual implementation. Nevertheless, so runs the idea in the literature, ``algorithmic probability can serve as a kind of `gold standard' for induction systems'' \citep[p.\ 83]{Sol97jcss}. 

The situation is  not unlike the project of \emph{ideal epistemology} (\citealp{Car22m}, also see \citealp[p.\ 298]{Net23pos}). Here we seek to identify robust norms of rationality, which could never be satisfied by actual, cognitively bounded reasoners, but which could serve as an ideal that can be \emph{approximated} \citep[p.\ 1156]{Car22m}. Similarly, the Solomonoff-Levin definition may serve as an ideal of prediction, and indeed we can design methods that explicitly approximate the Solomonoff-Levin semi-predictors and so this ideal.  The usual idea is that we  simply impose a finite bound on the potentially unbounded quantities in the definition, in particular on the number of computation steps (see, e.g., \citealp{GMGHODCRWMAV24icml}). We would recover the original definition in the limit as we let these quantities grow to infinity, so there is a clear sense in which these finite-bound definitions are (and, for larger bounds, are \emph{better}) approximations.
 
Moreover, this process of getting closer and closer to a gold standard of prediction by scaling up resources might be an appropriate high-level picture of the development of the field of machine learning. Perhaps this picture of approximating the Solomonoff-Levin semi-predictors is a helpful model of understanding what is going on in the field. Perhaps some important developments  were even explicitly motivated by the attempt to approximate the Solomonoff-Levin semi-predictors. And perhaps there is actually normative guidance to be had from the theory, namely that researchers \emph{should} be explicitly moving towards a project of better approximating the Solomonoff-Levin semi-predictors.

\subsubsection{Some skepticism}\label{sssec:approxsskep}
We have here a descriptive and a normative question. To start with the first, is it helpful to model or \emph{describe} machine learning research as being on a path towards the ideal of the Solomonoff-Levin semi-predictors?    

A problem for the descriptive account is that from the perspective that seems relevant here, this idea of approximations is meaningless. Namely, for \emph{any} actual (i.e., computable) prediction method, we could pick a universal Turing machine and finite bounds so that this method would count as an approximation of the Solomonoff-Levin definition.\footnote{Namely, 
	we could define a universal Turing machine such that within this finite approximation the only machine it emulates is one corresponding to this computable method.
	}\textsuperscript{,}\footnote{This 
observation is related to the claim that Bayesianism is empty insofar any conclusion can post-hoc be reconstructed as resulting from a Bayesian inference \citep{Alb01inc}. Of course, this is not the right perspective when we care about the normative question how to proceed in inference. But it is for the current descriptive question.} 
	That suggests that whatever state the field currently is in (in particular, whatever necessarily computable methods have been developed so far), we could model it as in a stage of approximating the Solomonoff-Levin predictors. 
	
That of course leaves open the possibility that prominent machine learning researchers were or are  motivated in their work to approximate the Solomonoff-Levin semi-predictors. Perhaps; this is an empirical or psychological question. The more substantial question is the \emph{normative} question whether they should.

The problem here is that the normative  advice to approximate the Solomonoff-Levin semi-predictors seems exceedingly weak, to the point of not really giving us anything. The immediate question arises, which of the infinitely many different Solomonoff-Levin semi-predictors should one seek to approximate? This is the problem of variance (sect.\ \ref{ssec:occamskep} above). Perhaps there are successful independent arguments for a preferred ``most natural'' universal Turing machine and hence Solomonoff-Levin semi-predictor. But in making such an argument we need to move beyond the theoretical story so far, which justified the Solomonoff-Levin semi-predictors---and \emph{all} equally---as a gold standard by their universal reliability. 
More generally, the theoretical story so far does not give us any guidance on what we presumably would  care about in practice, namely what approximations are in the shorter term better than others. Whatever the computable approximation we would use at any finite stage, it must automatically lose the one distinctive goodness property of the Solomonoff-Levin semi-predictors: 
their universal reliability.\footnote{As
	a consequence, we also have that two different finite approximations do not necessarily $\mu$-a.s.\ converge on the same predictions for a computable data-generating $\mu$. \citet{Net23pos} uses this observation to argue that proponents of Solomonoff induction face a dilemma: they cannot at the same time say that the problem of incomputability is overcome by using a computable approximation, and that the problem of variance 
is overcome by a.s.\ convergence in the limit.}

In the end, just the idea that we should try to approximate the theoretical ideal of universality does not appear to have much more substance than the counsel to include as many hypotheses as possible in our prior.\footnote{And
	weigh those by their simplicity, proponents would add. But even if we accept that a Solomonoff-Levin semi-predictor implements an objective simplicity bias, because it is based on Kolmogorov complexity, any actual predictor must correspond to some computable description length function. Since any predictor corresponds to \emph{some} description length function and so can be seen to implement \emph{some} simplicity bias (fn.\ \ref{fn:codesprobs} above), absent more guidance on what are legitimate compressors, this advice is again meaningless.} 
	That does not seem very deep as far as methodological advice goes (did we need all this math for that?). At the very least, one would like further guidance on a particular way of doing so, analogous to further arguments for a particular approximation or universal Turing machine. But such arguments look suspiciously like arguments for particular inductive biases.  Indeed, instead of trying to approximate universal prediction, it seems more helpful counsel to take seriously the lesson that every (actual, computable) learning method must have an inductive bias, and to think hard about what inevitable such bias your learning algorithm has or should have. 

\section{Conclusion}\label{sec:concl}

\cite{MinReeValLou25nc} provide one recent example of work which brings up Solomonoff induction in the context of seeking a better understanding of generalization in deep learning, and in particular of the inductive biases of deep neural nets.\footnote{The
	context of such work is the ``generalization puzzle,'' prompted by the apparent inability of classical machine learning theory to account for the generalization success of (in particular) deep neural nets \citep{BGKP22inc}.}
	The authors' general idea is to use a Bayesian analysis to identify an inductive bias towards simplicity, and that ``if this inductive bias in the prior matches the simplicity of structured data then it would help explain why DNNs generalize so well'' (\citealp[p.\ 2]{MinReeValLou25nc}; also see \citealp{ValCamLou19arxiv}). 
	
Specifically, the authors study the learning of Boolean functions by a fully connected neural network with ReLU activation functions, and consider a prior $P$ over functions which arises from random sampling of the network's parameter values. This prior, they observe, is insensitive to different choices of sampling distribution, and exhibits a bias ``towards simple functions with low descriptional complexity'' (ibid., p.\ 2). More precisely, the prior probabilities $P(f)$ track the terms $2^{-\tilde{K}(f)}$, where $\tilde{K}$  is a complexity measure based on the Lempel-Ziv (LZ) compression algorithm, which serves as ``a proxy for the true (but uncomputable) Kolmogorov complexity''(ibid.). 
 
Given the emphasis on approximating Kolmogorov complexity, it is no surprise that the authors conclude that an ``interesting direction to explore is the more formal arguments in [algorithmic information theory] relating to the optimality of Solomonoff induction'' (ibid., p.\ 7), and that ``there is an interesting qualitative similarity between Solomonoff induction and learning with simplicity-biased DNNs. In both cases, all hypotheses are included in principle (Epicurus) while simpler hypotheses are exponentially preferred (Occam)'' (ibid., suppl.\ info., p.\ 10).\footnote{The
	reference to Epicurus stems from (\citealp[pp.\ 344f]{LiVit92jcss}; also see \citealp[sect.\ 5.1.1]{LiVit19}), who formulate as ``the principle of Epicurus' multiple explanations (or indifference)'' that ``[i]f more than one theory is consistent with the observations, keep all theories.'' This they further associate with the principle to assign each equal probability (see fn.\ \ref{fn:indiff}).}

Cautiously formulated as these hopes for further insights from Solomonoff induction are,\footnote{Or
	 maybe this is just  ``decorative prose'' \citep{KeiLou25philsci}\dots} 
	 our  discussion suggests that we have reason to be skeptical. The finding that (certain) DNNs can be seen to operate (under certain circumstances) under a prior that resembles the distribution corresponding to a popular compression algorithm,\footnote{Or
	 indeed class of such algorithms: the authors write, with reference to \citep{ValCamLou19arxiv,BPKB23acl}, that ``[o]ther complexity measures give similar results'' \citep[p.\ 2]{MinReeValLou25nc}.} and that this prior generally matches real-world data, is, of course, interesting and possibly significant. The use of coding theory may also be helpful to analyze this  further.\footnote{In
	 particular the relationship between (via any given compressor) descriptional complexity and the number of objects with this complexity \citep[p.\ 4]{MinReeValLou25nc}.} 
	 But the suggestion that this particular prior is a proxy of ``the'' universal simplicity prior, namely Solomonoff's (with its link to ``optimality'' and Occam's razor), does not seem very useful, or even meaningful. 
Rather than thinking that Solomonoff induction might somehow step in here to complete  an explanation of generalization success, it seems more accurate and useful to recognize that the identification of this \emph{particular} inductive bias, aligning with this \emph{particular} type of compression algorithm (and, if you will, this \emph{particular} type of simplicity bias), can only be a starting point. 

Namely, for any actual, computable, learning procedure, the picture we started this chapter with is left intact. Every learning method must have restrictive inductive biases, restrictive inductive assumptions. A Bayesian analysis helps us to locate these restrictive assumptions, namely in the prior. The Bayesian philosophical point that this prior is strictly unjustified translates into the practical point that we should strive to understand the inductive assumptions involved. In what kind of learning situations, for what kind of data are these assumptions appropriate---and in what cases are they not? The  lure of a theory of universal induction notwithstanding, this is no different for the inductive biases in deep neural networks.

\end{document}